\newcommand{\be}{\begin{eqnarray*}}
\newcommand{\ee}{\end{eqnarray*}}
\newtheorem{theorem}{Theorem}[section]
\begin{document}

\title{{\large Endogeneity in Weakly Separable Models without Monotonicity}\thanks{
    An earler version of this paper was circulated under the title ``Identification and Estimation of weakly separable models without Monotonicity". We are grateful to Jeremy Fox, Sukjin Han, Arthur Lewbel, Whitney Newey, Elie Tamer, Ed Vytlacil, Haiqing Xu, conference participants at the 2018 West Indies Economic Conference, the 2019 SUFE econometrics meetings, the 2020 Texas Camp Econometrics,  the 2020 World Congress, and seminar participants at various institutions for helpful comments and suggestions. }}
\author{{\small Songnian Chen} \\
{\small Zhejiang University} \and {\small Shakeeb Khan} \\
{\small Boston College} \and {\small Xun Tang} \\
{\small Rice University }}
\date{{\footnotesize \today}}
\maketitle

\begin{center}
{\footnotesize \textbf{Abstract} }
\end{center}

\noindent {\footnotesize 

We identify and estimate treatment effects when potential outcomes are weakly separable with a binary endogenous treatment. 
\citeasnoun{vytlacilyildiz} proposed an identification strategy that exploits the mean of observed outcomes, but their approach requires a monotonicity condition. 
In comparison, we exploit full information in the entire outcome distribution, instead of just its mean. As a result, our method does not require monotonicity and is also applicable to general settings with multiple indices. We provide examples where our approach can identify treatment effect parameters of interest whereas existing methods would fail. These include models where potential outcomes depend on multiple unobserved disturbance terms, such as a Roy model, a multinomial choice model, as well as a model with endogenous random coefficients. We establish consistency and asymptotic normality of our estimators.}

\noindent {\small \textbf{JEL Classification}: C14, C31, C35 \newline
\noindent \textbf{Key Words} Weak Separability, Treatment Effects,
Monotonicity, Endogeneity} \pagebreak

\setcounter{equation}{0}

\section{Introduction}

\label{introduction}

Consider a weakly separable model with a binary endogenous variable:%
\begin{eqnarray}
Y &=&g(v_1(X,D),v_2(X,D),...v_J(X,D),\varepsilon )  \label{outcomeeq} \\
D &=&1\left \{ \theta (Z)-U>0\right \} \label{treateq}
\end{eqnarray}%
where $( v_1(X,D),v_2(X,D),...v_J(X,D) ) \equiv v(X,D) $ is a $J$-vector of
unknown linear or nonlinear indices in the outcome equation (1.1) and $D$ is
a binary endogenous variable defined by equation (1.2). Here $%
X\in \mathbb{R}^{d_{x}}$ and $Z\in \mathbb{R}^{d_{z}}$ are vectors of
observable exogenous variables, which may have overlapping elements. This paper is about the identification and estimation of the
average treatment effect (ATE).\footnote{%
The ATE is one of several parameters of interest in the program evaluation
literature. In their seminal work, \citeasnoun{imbensangrist} focus on a
different parameter of interest, the Local Average Treatment Effect (LATE)
which is the ATE for the subset of the population referred to as compliers.
However, as pointed out in \citeasnoun{HV05}, \citeasnoun{heckvyt-handbook}, %
\citeasnoun{CHV10}, and \citeasnoun{santos18}, there are settings where the
LATE by itself is not always the focal point of policy studies.}  Similar to %
conditions in \citeasnoun{vytlacilyildiz}, we require that there is some continuous element in $Z$
excluded from $X$ so that $\theta(Z)$ varies continuously conditional on $X$, and that we can vary $X$ after conditioning on $\theta(Z)$.%
\footnote{%
This does not necessarily require a second exclusion restriction that there be an
element in $X$ not in $Z$. As explained in \citeasnoun{vytlacilyildiz}, what
is required is that the element in $Z$ not in $X$ be continuously
distributed.}
It is worth noting that the method we propose also applies directly in a more general setup where the error in the outcome equation is specific to $D$, i.e., with $\varepsilon$ replaced by $\varepsilon_D$, provided the marginal distribution of $\varepsilon_d$ is the same for $d\in\{0,1\}$.

The literature on program evaluation abounds in examples in which self-selected, endogenous treatment depends on exogenous instruments that have no direct impact on potential outcomes.
For instance, \citeasnoun{vytlacilyildiz} provided an example where $D$ indicates an individual's enrollment in job training at date 0, and $Y$ is employment status at date 1. In this case, $Z$ may contain variables summarizing local labor market conditions at date 0. They noted these variables affect the opportunity cost of training, and consequently the training decision at date 0, but do not directly affect the individual's employment at date 1.

In the system of equations above, $U$ is the unobservable
random variable normalized to follow the standard uniform distribution, and
the error term $\varepsilon$ in the outcome equation is allowed to be a
random vector with a known dimension.  We assume $\left( X,Z\right) $ are independent of $%
(\varepsilon ,U)$. Note that we allow $v(X,D)$ to be a vector of multiple
indices, whereas existing methods, such as \citeasnoun{vytlacilyildiz}, can only be
applied when it is a single index.
\citeasnoun{HV05} and \citeasnoun{carneiroLee2009} maintained that $(\varepsilon,U)$ is independent of $Z$ conditional on $X$, and used local instrumental variables to estimate marginal treatment effect (MTE) for realized values of $U$ over the support of $\theta(Z)$ given $X$. In their case, integrating out MTE to ATE would require  the support of $\theta(Z)$ given $X$ to be the complete unit interval, i.e. $[0,1]$. 
In comparison, we propose a method that relaxes such a support condition by exploiting more structure in the potential outcome equation and its  full distribution. 
Moreover, while \citeasnoun{carneiroLee2009} use $E[1\{Y\leq y\}|X,D,U=p]$ for \textit{each fixed $y$} to recover the conditional distribution of potential outcomes at $y$, we use the full distribution of the observed outcomes to find the matching covariates. This helps us to recover ATE without the aforementioned full support condition.
As in \citeasnoun{vytlacilyildiz}, our method requires that at least some covariates in the potential outcome equation are exogenous.

Since \citeasnoun{vytlacilyildiz}, important work has considered
identification and estimation of similar models, but under alternative
conditions, notably on the supports of $D$, and $\theta(Z)$ as well
as the dimension of $\varepsilon$. \citeasnoun{neweyimbens} assume $D$ is
continuous and monotone in the error term, $\theta(Z)$ is continuous with
large support, and $\varepsilon$ is scalar. \citeasnoun{kasy14} allows $%
\varepsilon$ to be a vector as we do here but imposes that both $D$ and $%
\theta(Z)$ are continuously distributed and further assumes a monotone
relationship between the two. \citeasnoun{dhault2014} and %
\citeasnoun{torgo2014} assume $D$ is continuous, $\varepsilon$ is scalar,
and assume monotonicity in the error term in each of the two equations,
though they allow for $\theta(Z)$ to be discrete. \citeasnoun{vuongxu}
assume $D$ is discrete and $\theta(Z)$ is continuous, restricts $\varepsilon$
to be scalar and requires $Y$ in (\ref{outcomeeq}) to be monotone in $%
\varepsilon$. \citeasnoun{JunEtal2016} study an extension of %
\citeasnoun{vytlacilyildiz}, and also require the same monotonicity
condition as in the latter. \citeasnoun{fengjun} shows how
to identify nonseparable triangular models where the endogenous variable is
discrete but restricted to have larger support than the instrument variable.%
\footnote{%
These papers focus on point identification. For partial identification of a
model with a binary outcome, see \citeasnoun{shaikhvytlacil11} and %
\citeasnoun{mourifie15}. \citeasnoun{santos18} explore partial
identification of relevant treatment effect parameters in models without
structure imposed on the outcome equation. In such settings, attaining point
identification requires support conditions on the propensity score function
that are stronger than imposed here.}


As in the conventional framework, two potential outcomes $Y_{1}$ and $Y_{0}$
satisfy%
\begin{equation*}
Y_D = g(v(X,D),\varepsilon) \mbox{ for } D=0,1. \text{\ }
\end{equation*}%
We only observe $\left( Y,D,X,Z\right) $, where $Y=DY_{1}+(1-D)Y_{0}$. In this model, we do not impose parametric distribution on the error terms $(\varepsilon,U)$ or a linear index structure. %
Similarly, \citeasnoun{vytlacilyildiz} also do not require such restrictions, but assume that $v(X,D) \in \mathbb{R}$ is a single
index, and 
\begin{equation}
E\left[ g(v,\varepsilon )|U=u\right] \mbox{ is strictly increasing in }v \in 
\mathbb{R}\mbox{ for all }u.
\end{equation}%
We do not impose any such monotonicity structure. That is because our approach exploits the full information in the distribution of the outcome variable, instead of just its mean. 
Indeed, when the outcome distribution function is more informative than the mean, our method is applicable to more general settings; 
in particular, not only do we not rely on such a monotonicity assumption, but we also allow for multiple indices.

In Section \ref{Identification} we present the identification argument and discuss its required conditions.
In Sections \ref{Examples} and \ref{Extension} we provide some examples in
which such a monotonicity condition fails, but the average effect of the
binary endogenous variable is still identified. To reiterate, we allow the potential outcomes to be weakly separable in multiple indices, that is, $%
v(X,D)=(v_{1}(X,D),v_{2}(X,D),...,v_{J}(X,D)) \in \mathbb{R}^J$.
We consider the identification and estimation of the average treatment
effect of $D$ on $Y$, $E(Y_{1}|X\in A)$, $E(Y_{0}|X\in A)$, and $%
E(Y_{1}-Y_{0}|X\in A)$, for some set $A$, without the aforementioned
monotonicity. Indeed, for the case with multiple indices $v(X,D) \in \mathbb{%
R}^J$, the monotonicity condition is no longer well defined.


\setcounter{equation}{0}

\section{Identification\label{Identification}}

Our identification strategy is based on the notion
of \emph{matching covariates}.
Let $Supp(V)$ denote the support of a generic random vector $V$.
Consider the identification of $E(Y_{1}|X=x)$ for some $x\in Supp(X)$.
Under the assumption that $%
(\varepsilon ,U)\bot (X,Z)$,%
\begin{eqnarray}
&&\text{ }E(Y_{1}|X=x)=E(Y_{1}|X=x,Z=z)  \notag \\
&&\text{ }=E(DY_{1}|X=x,Z=z)+E[(1-D)Y_{1}|X=x,Z=z]  \notag \\
&&\text{ }=P(z)E(Y|D=1,X=x,Z=z)+[1-P(z)]E(Y_{1}|D=0,X=x,Z=z)  \label{EQ1}
\end{eqnarray}%
where $P(z)\equiv E(D|Z=z) = \theta(z)$ because $U$ is normalized to follow a standard uniform distribution. The only term not directly identifiable
on the right-hand side of (\ref{EQ1}) is:\footnote{
	If the support of $P(Z)$ given $x$ covers the full closed interval $[0,1]$, then $E(Y_1|X=x)$ is directly identified as $E(Y|D=1,X=x,Z=z)$ at $z$ s.t. $P(z)=1$. However this means point identification hinges on the event ``$P(Z)=1$ given $x$''. 
	}%
\begin{equation*}
E(Y_{1}|D=0,X=x,Z=z)=E[g(v(x,1),\varepsilon )|U\geq P(z)]\text{.}
\end{equation*}
The main idea behind our approach would at first seem to be similar to that of \citeasnoun{vytlacilyildiz}%
, which is to use exogenous variations of covariates in the outcome equation to 
find some $(\tilde{x},\tilde{z})\in Supp(X,Z)$ such that%
\begin{equation}
P(z)=P(\tilde{z}) \text{ and } v(x,1)=v(\tilde{x},0)  \label{EQ2}
\end{equation}%
so that%
\begin{eqnarray} \label{EQ23}
E(Y|D=0,X=\tilde{x},Z=\tilde{z})&=&E(Y_{0}|D=0,X=\tilde{x},Z=\tilde{z}) \\
&=&E[g(v(\tilde{x},0),\varepsilon )|U\geq P(\tilde{z})]=E[g(v(x,1),\varepsilon
)|U\geq P(z)] \notag \\
&=& E[Y_1|D=0,X=x, Z=z]\text{.} \notag
\end{eqnarray}
However, unlike  \citeasnoun{vytlacilyildiz}, we utilize the full distribution of $Y$ (rather than its first moment only) while searching for such pairs of $(x,%
\tilde{x})$  in (\ref{EQ2}). This allows us to relax the single-index and monotonicity conditions.

To fix ideas, let there be continuous components in $Z$ that are excluded from $X$.
For any $p$ on the support of $P(Z)$ given $X=x$, and for any $y$, define 
\begin{eqnarray}
h_{1}^{\ast }(x,y,p) &=&E[D1\left \{ Y\leq y\right \} |X=x,P(Z)=p]  \notag \\
&=&E\left[ 1\left \{ U<P(Z)\right \} 1\left \{ g(v(X,1),\varepsilon )\leq
y\right \} |X=x,P(Z)=p\right]  \notag \\
&=&\int_{0}^{p}F_{g|u}(y;v(x,1))du\text{,}  \label{EQ3}
\end{eqnarray}%
where 
\begin{equation*}
F_{g|u}(y;v(x,d))\equiv E[1\{g(v(x,d),\varepsilon )\leq y\}|U=u]\text{,}
\end{equation*}%
with $v(x,d)$ being a realized index at $X=x$ and the expectation in the
definition of $F_{g|u}$ is with respect to the distribution of $\varepsilon $
given $U=u$. The last equality in (\ref{EQ3}) holds because of independence
between $(\varepsilon ,U)$ and $(X,Z)$. 
Assume that for all $y,x$ and $d \in \{ 0,1 \}$, the function $F_{g|u}(y;v(x,d))$ is continuous in $u$ over $[0,1]$. This implies $h^*_1$ is differentiable in $p$.
By construction, $h_{1}^{\ast
}(x,y,p)$ is directly identified from the joint distribution of $(D,Y,X,Z)$
in the data-generating process. Furthermore, for any pair $p_{1}>p_{2}$,
define:%
\begin{equation*}
h_{1}(x,y,p_{1},p_{2})\equiv h_{1}^{\ast }(x,y,p_{1})-h_{1}^{\ast
}(x,y,p_{2})=\int_{p_{2}}^{p_{1}}F_{g|u}(y;v(x,1))du\text{.}
\end{equation*}%
Likewise, define 
\begin{eqnarray*}
h_{0}^{\ast }(x,y,p) &=&E((1-D)1\left \{ Y\leq y\right \} |X=x,P(Z)=p) \\
&=&E\left[ 1\left \{ U\geq P(Z)\right \} 1\left \{ g(v(X,0),\varepsilon
)\leq y\right \} |X=x,P(Z)=p\right] \\
&=&\int_{p}^{1}F_{g|u}(y;v(x,0))du.
\end{eqnarray*}%
and let 
\begin{equation*}
h_{0}(x,y,p_{1},p_{2})\equiv h_{0}^{\ast }(x,y,p_{2})-h_{0}^{\ast
}(x,y,p_{1})=\int_{p_{2}}^{p_{1}}F_{g|u}(y;v(x,0))du\text{.}
\end{equation*}%
Let $\mathcal{P}_{x}$ denote the support of $P(Z)$ given $X=x$; let $ Int( \mathcal{P}_{x}\cap \mathcal{P}_{\tilde{x}} )$ denote the interior of the intersection of $\mathcal{P}_{x}$ and $\mathcal{P}_{\tilde{x}}$. Similar to \citeasnoun{HV05} and \citeasnoun{carneiroLee2009}, we use continuous, exogenous variation in $Z$ to establish here that for any $ x \neq \tilde{x}$ with $ Int( \mathcal{P}_{x}\cap \mathcal{P}_{\tilde{x}} ) \neq \emptyset$ and any $y$,%
\begin{equation}
h_{1}(x,y,p,p^{\prime })=h_{0}(\tilde{x},y,p,p^{\prime })\text{ for all }%
p>p^{\prime }\text{ on } Int(\mathcal{P}_{x}\cap \mathcal{P}_{\tilde{x}})  \label{EQ4}
\end{equation}%
if and only if%
\begin{equation}
F_{g|p}(y;v(x,1))=F_{g|p}(y;v(\tilde{x},0))\text{ for all }p\in Int(\mathcal{P}_{x}\cap \mathcal{P}_{\tilde{x}})\text{.}  \label{EQ5}
\end{equation}
\noindent Sufficiency of (\ref{EQ5}) is immediate from the definition of $h_{1}$ and $h_{0}$.
To see its necessity, note that for all $p>p^{\prime }$ on $Int(\mathcal{P}_{x}\cap \mathcal{P}_{%
\tilde{x}})$,%
\begin{equation*}
\left. \frac{\partial h_{1}(x,y,\tilde{p},p^{\prime })}{\partial \tilde{p}}%
\right \vert _{\tilde{p}=p}=\left. \frac{\partial h_{1}^{\ast }(x,y,\tilde{p}%
)}{\partial \tilde{p}}\right \vert _{\tilde{p}=p}=F_{g|p}(y;v(x,1))
\end{equation*}%
{and}%
\begin{equation*}
\left. \frac{\partial h_{0}(\tilde{x},y,\tilde{p},p^{\prime })}{\partial 
\tilde{p}}\right \vert _{\tilde{p}=p}=-\left. \frac{\partial }{\partial 
\tilde{p}}h_{0}^{\ast }(\tilde{x},y,\tilde{p})\right \vert _{\tilde{p}%
=p}=F_{g|p}(y;v(\tilde{x},0))\text{.}
\end{equation*}%
Thus (\ref{EQ4}) and (\ref{EQ5}) are equivalent. 


Next, we collect identifying assumptions as follows:\medskip

\noindent ASSUMPTION A-1: The distribution of $U$ is absolutely continuous
with respect to Lebesgue measure, and is normalized to standard uniform, and $\theta(z)$ in (\ref{treateq}) is continuous in $z$. \medskip

\noindent ASSUMPTION A-2: The random vectors $(U,\varepsilon )$ and $(X,Z)$ are independent.
There exists at least one continuously distributed component in $Z$ that is excluded from $X$.
\medskip

\noindent ASSUMPTION A-3: Both $g(v(X,1),\varepsilon )$ and $g(v(X,0),\varepsilon )$ have finite first moments conditional on $U=u$ for all $u \in [0,1]$. 

\noindent  ASSUMPTION A-4: For all realized values of $y,x$ and $d\in\{0,1\}$, $E[1\{g(v(x,d),\epsilon)
\leq y\}|U=u] $ is continuous in $u$ over $[0,1]$. \medskip

\noindent 
ASSUMPTION A-5: For any $x\neq \tilde{x}$ such that $Int(\mathcal{P}_{x}\cap \mathcal{P}_{\tilde{x}})$ is nonempty, 
$F_{g|p}(y;v(x,1))=F_{g|p}(y;v(\tilde{x},0))$ for all $y$ and $p\in Int(\mathcal{P}_{x}\cap \mathcal{P}_{\tilde{x}})$ if and only if $v(x,1)=v(\tilde{x},0)$.\medskip

Assumptions A-1 and A-3 are common regularity conditions in the literature. Assumption A-2 consists of a standard condition of instrument exogeneity. It allows $P(Z)$ to have continuous, exogenous variation conditional on $X$. The continuity of instrument is also common in the literature on treatment effects. Examples include \citeasnoun{heckmanvytlacil} and \citeasnoun{vytlacilyildiz}.
As noted above, Assumption A-4 ensures the identifiable functions $h^*_1(x,y,p),h^*_0(x,y,p)$ are both differentiable in the last argument $p$.

It is important to note that Assumption A-5 allows the support of $P(Z)$ to be a strict subset of the interval $(0,1)$. 
This is because our method does not use an identification-at-infinity argument, which would require $P(Z)$ to have full support [0,1] given $X$ in order to identify $E(Y_0|X)$ directly. 
(See Footnote 4 above.)
We also note that Assumption A-5 relaxes two limitations of Assumption 4 in \citeasnoun{vytlacilyildiz}. Specifically, to identify pairs $(x,\tilde{x})$ with $v(x,1)=v(\tilde{x},0) $, \citeasnoun{vytlacilyildiz} rely on two assumptions that $v(X,D) \in \mathbb{R}$ is a single index and that $E\left[ g(v(x,1),\varepsilon )|U=u\right] =E\left[ g(v(\tilde{x},0),\varepsilon )|U=u\right] $ if and only if $v(x,1)=v(\tilde{x},0) $. 
The latter holds under a maintained assumption that $E\left[
g(v(x,d),\varepsilon )|U=p\right] $ is a strictly monotonic function of $%
v(x,d)$. 
In comparison, we construct an identification strategy without the single index and monotonicity restrictions by matching conditional distributions $F_{g|p}(\cdot ;v(x,1))$ and $\ F_{g|p}(\cdot ;v(\tilde{x},0))$. 
    
The role of Assumption A-5 in our method can be illustrated by drawing an analogy with a standard identifying condition in nonlinear regression: $ Y = f(X,\theta ^*) + \varepsilon$. In this case, identification of $\theta^*$ requires: ``$f(x,\theta^* ) = f(x,\theta)$ for all $x$ if and only if $\theta^* =\theta$ ''.
To see how this is related to Assumption A-5 in our setting, let $ \theta_1, \theta_0$ be shorthand for $v(x,1), v(\tilde{x},0)$ respectively, and let $m(y,p,\theta) \equiv F_{g|p}(y,\theta )$.
Then our method requires: ``$m(y,p,\theta _{1})=m(y,p,\theta _{0})$ for all $(y,p)$ if and only if $\theta _{1}=\theta _{0}$ ''.

It is worth emphasizing that Assumption A-5 only presents our identification condition in the weakest form, for the sake of generality. 
Later in Section \ref{Examples}, we exploit the structures embedded in specific examples to show how Assumption A-5 can be satisfied  under intuitive, mild primitive conditions.

Next, we specify conditions under which such pairs of covariates exist on the support. Define $\mathcal S \equiv \{(x,\tilde{x}):v(x,1)=v(\tilde{x},0)\} $ and $ \mathcal T \equiv \{(x,\tilde{x}): \exists z, \tilde{z} \text{ with } (x,z),(\tilde{x},\tilde{z})\in Supp(X,Z) \text{ and } P(z)=P(\tilde{z}) \in Int(\mathcal{P}_x\cap \mathcal{P}_{\tilde{x}})\}$.
Let $ \mathcal X^1 \equiv \{x:\exists \tilde{x} \text{ with }(x,\tilde{x})\in \mathcal S \cap \mathcal T\}$ and $ \mathcal X^0 \equiv \{x:\exists \tilde{x} \text{ with }(\tilde{x},x)\in \mathcal S \cap \mathcal T\}$.

\noindent ASSUMPTION A-6: $\Pr (X\in \mathcal X^1)>0$ \ and $\Pr (X\in \mathcal X^0)>0$.\medskip

This condition is similar to Assumption A-4 in \citeasnoun{vytlacilyildiz}. By definition, for each $x\in\mathcal X^1$, we can find $\tilde{x}\in Supp(X)$ such that there exists $(z,\tilde{z}) $ with $ (x,z),(\tilde{x},\tilde{z}) \in Supp(X,Z)$, $v(x,1)=v(\tilde{x},0)$, and $P(z)=P(\tilde{z})$. 
This requires variation in $ x $ while holding $P(Z)$ fixed.
As noted earlier (Footnote 2), this does \textit{not} necessarily require a ``second exclusion restriction'' that there is an element in $X$ that is excluded from $Z$.
In general, there exist multiple such values of $\tilde{x}$ that can be matched with this $x$.
Hence for each $x\in\mathcal X^1$, we define the set of such matched values as 
\begin{equation*}
\lambda_0(x) \equiv \{\tilde{x}: 
h_{1}(x,y,p,p^{\prime })=h_{0}(\tilde{x},y,p,p^{\prime }) \text{ }\forall p>p^{\prime }\text{ on } Int(\mathcal{P}_{x}\cap \mathcal{P}_{\tilde{x}})\},
\end{equation*}
and define $ \mathcal{P}_0^*(x) \equiv  \bigcup_{\tilde{x} \in \lambda_0(x)}(\mathcal{P}_{\tilde{x}}\cap \mathcal{P}_x)$.
Note that by definition of $\mathcal{X}^1$, the set $\mathcal{P}_x \cap \mathcal{P}_{\tilde{x}}$ must be non-empty for $x \in \mathcal{X}^1$ and $ \tilde{x} \in \lambda_0(x)$ under our maintained assumptions.
Likewise, by symmetry, for each $x\in\mathcal X^0$, we define
\begin{equation*}
\lambda_1(x) \equiv \{\tilde{x}:
h_{0}(x,y,p,p^{\prime })=h_{1}(\tilde{x},y,p,p^{\prime }) \text{ }\forall p>p^{\prime }\text{ on } Int(\mathcal{P}_{x}\cap \mathcal{P}_{\tilde{x}})\}, 
\end{equation*}
and let $ \mathcal{P}_1^*(x) \equiv  \bigcup_{\tilde{x} \in \lambda_1(x)}(\mathcal{P}_{\tilde{x}}\cap \mathcal{P}_x)$.

The theorem below shows how to use such matched values to  identify the conditional mean of potential outcomes.

\begin{theorem}
\label{theorem2.1} Suppose Assumptions (A-1)-(A-6) hold. For each $x\in\mathcal X^1$, \begin{equation} \label{id_EY1}
E(Y_1|X=x) = E(DY|X=x,P(Z)=p) + E[(1-D)Y|X\in\lambda_0(x),P(Z)=p]
\end{equation}
for any $p\in \mathcal{P}_0^*(x)$.
For each $x\in\mathcal X^0$,
\begin{equation} \label{id_EY0}
E(Y_0|X=x) = E(DY|X\in\lambda_1(x),P(Z)=p) + E[(1-D)Y|X=x,P(Z)=p]
\end{equation}
for any $p\in \mathcal{P}_1^*(x)$.
\end{theorem}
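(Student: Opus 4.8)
The plan is to establish (\ref{id_EY1}) by recasting the decomposition in (\ref{EQ1}) with $P(Z)$ as the conditioning variable, isolating the single unidentified counterfactual term, and then replacing it by the identified object built from the matched set $\lambda_0(x)$; the claim (\ref{id_EY0}) will follow by the symmetric construction with the roles of $D=1$ and $D=0$ (and of $\lambda_1,\mathcal{P}_1^*$) interchanged, so I focus on the first equality.

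First I would use Assumption A-2 to pass from conditioning on $Z=z$ to conditioning on $P(Z)=p$. Since, once $X=x$ is fixed, $Y_1=g(v(x,1),\varepsilon)$ depends on $(X,Z)$ only through $\varepsilon$, and $(\varepsilon,U)\perp(X,Z)$, we have $E(Y_1|X=x)=E(Y_1|X=x,P(Z)=p)$ for any $p\in\mathcal{P}_x$. I then split into the $D=1$ and $D=0$ contributions,
\[
E(Y_1|X=x,P(Z)=p)=E(DY_1|X=x,P(Z)=p)+E[(1-D)Y_1|X=x,P(Z)=p].
\]
The first term needs no counterfactual: because $DY_1=DY$, it equals $E(DY|X=x,P(Z)=p)$, the first term on the right-hand side of (\ref{id_EY1}) (well-defined since any $p\in\mathcal{P}_0^*(x)$ lies in $\mathcal{P}_x$). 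The remaining task is to identify $E[(1-D)Y_1|X=x,P(Z)=p]$.

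Second, I would evaluate the counterfactual term and its proposed replacement in closed form and show they coincide. Using $D=1\{U<P(Z)\}$, independence, and $f_U\equiv 1$ on $[0,1]$,
\[
E[(1-D)Y_1|X=x,P(Z)=p]=\int_p^1 E[g(v(x,1),\varepsilon)|U=u]\,du,
\]
while for any $\tilde{x}\in\lambda_0(x)$ with $p\in\mathcal{P}_{\tilde{x}}\cap\mathcal{P}_x$ (for which $D=0$ gives $Y=Y_0=g(v(\tilde{x},0),\varepsilon)$),
\[
E[(1-D)Y|X=\tilde{x},P(Z)=p]=\int_p^1 E[g(v(\tilde{x},0),\varepsilon)|U=u]\,du.
\]
The key link is that, by its defining condition, $\tilde{x}\in\lambda_0(x)$ is equivalent through the established chain (\ref{EQ4})$\Leftrightarrow$(\ref{EQ5}) and Assumption A-5 to the index equality $v(x,1)=v(\tilde{x},0)$. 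This equality forces $F_{g|u}(\cdot;v(x,1))=F_{g|u}(\cdot;v(\tilde{x},0))$ as full conditional distributions for every $u\in[0,1]$, hence equality of the conditional means $E[g(v(x,1),\varepsilon)|U=u]=E[g(v(\tilde{x},0),\varepsilon)|U=u]$ for all $u$ (finite by Assumption A-3). Integrating over $[p,1]$ shows the two displays are equal, so each admissible matched $\tilde{x}$ reproduces the counterfactual term exactly.

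Finally, I would aggregate over the matched set. The previous step shows that $E[(1-D)Y|X=\tilde{x},P(Z)=p]$ takes the same value, $\int_p^1 E[g(v(x,1),\varepsilon)|U=u]\,du$, for every $\tilde{x}\in\lambda_0(x)$ admissible at $p$; it is therefore constant in $\tilde{x}$. Averaging over the conditional law of $X$ given $X\in\lambda_0(x)$ and $P(Z)=p$ thus leaves the value unchanged, giving $E[(1-D)Y|X\in\lambda_0(x),P(Z)=p]=E[(1-D)Y_1|X=x,P(Z)=p]$, where $p\in\mathcal{P}_0^*(x)$ guarantees the conditioning event has positive probability so the average is well-defined. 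Substituting back yields (\ref{id_EY1}). I expect the main obstacle to be precisely this aggregation step: one must confirm that conditioning on the whole set $\lambda_0(x)$, rather than on a single matched $\tilde{x}$, is harmless, which rests on Assumption A-5 forcing $v(x,1)=v(\tilde{x},0)$ for \emph{every} member of $\lambda_0(x)$, so that the per-$\tilde{x}$ counterfactual contributions are identical constants.
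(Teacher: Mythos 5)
Your proof is correct and takes essentially the same route as the paper's: both arguments use the established equivalence of (\ref{EQ4}) and (\ref{EQ5}) together with Assumption A-5 to conclude $v(x,1)=v(\tilde{x},0)$ for every $\tilde{x}\in\lambda_0(x)$, and then substitute the matched conditional expectation for the sole counterfactual term in the decomposition (\ref{EQ1}). The only difference is one of explicitness: you write out the integral representations and the final averaging over the matched set $\lambda_0(x)$ (noting each member contributes the identical constant), steps the paper compresses into its citations of (\ref{EQ1}), (\ref{EQ23}), and Assumption A-2.
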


\begin{proof} As established in the text, (\ref{EQ4}) and (\ref{EQ5}) are equivalent under Assumptions (A-1)-(A-4). 
Then by Assumption (A-5), we know that for each  $x\in\mathcal X^1$ and $\tilde{x}\in \lambda_0(x) $, there exists  $z,\tilde{z}$ such that $(x,z),(\tilde{x},\tilde{z})\in Supp(X,Z)$, $v(x,1)=v(\tilde{x},0)$ and $P(z)=P(\tilde{z})=p\in \mathcal{P}^*_0(x)$.
It follows from (\ref{EQ1}) and (\ref{EQ23}) that
\begin{equation}\label{EQ1Repeat}
E(Y_{1}|X=x)  = E(DY|X=x,Z=z)+E[(1-D)Y|X=\tilde{x},Z=\tilde{z}].  %
\end{equation}%
By (A-2), the right-hand side of (\ref{EQ1Repeat})  equals the right-hand side of (\ref{id_EY1}).
A symmetric argument proves (\ref{id_EY0}).
\end{proof}

It is worth mentioning that the conditions for this theorem do not directly restrict the support of potential outcomes.
As we show in the next section, our method applies in important applications where the potential outcome is either discrete (e.g., determined by multinomial choices), or multi-dimensional with both discrete and continuous components (e.g., determined in a Roy model). 

\setcounter{equation}{0}

\section{Examples}

\label{Examples}

In this section, we present several examples in which the potential outcomes depend on multi-dimensional indices with an endogenous treatment. 
More importantly, we show how the specific structure embedded in each application naturally leads to transparent, primitive conditions that imply Assumption A-5, thus corroborating the wide applicability of this general approach.

In the first and third examples, the monotonicity condition in \citeasnoun{vytlacilyildiz} does not hold; in the second example, the identification requires a generalization of the monotonicity condition into an invertibility condition in higher dimensions.

\noindent \textbf{Example 1. (Heteroskaedastic shocks in Uncensored or Censored outcomes)} First, consider a triangular system where a continuous \textit{uncensored} outcome is determined by double indices $v(X,D)\equiv (v_{1}(X,D),v_{2}(X,D))$:\footnote{
    \citeasnoun{abrevaya2019estimation} adopted a different approach to identify ATE in the same model with uncensored, continuous outcome and multiplicative, heteroskaedastic shocks above, using a binary instrument $Z$. 
    They showed the conditional mean of the observed outcome $Y$, when scaled by the conditional covariance of $Y1\{D=d\}$ and $Z$, is a weighted sum of the conditional means of potential outcomes: $v_1(x,0)$ and $v_1(x,1)$. 
    Thus, using the means of the scaled $Y$ conditional on $Z = 0,1$ respectively, one can construct a linear system that identifies $v_1(x,0)$ and $v_1(x,1)$, provided the proper rank condition holds.
    Their method leverages this particular specification of multiplicative endogeneity, but does not generalize to the case of censored outcomes or in the other examples we consider later.
}
\begin{equation*}
Y=g(v(X,D),\varepsilon )=v_{1}(X,D)+v_{2}(X,D)\varepsilon \text{ for }D\in
\{0,1\} \text{.}
\end{equation*}
The selection equation determining the actual treatment is the same as (1.2). 
In this case the concept of monotonicity in $v\in \mathbb{R}^{2}$ is not well-defined, so the procedure proposed in \citeasnoun{vytlacilyildiz} is not suitable here.\footnote{%
For this particular design, the approach proposed in \citeasnoun{vuongxu}
should be valid. But it will not be for a slightly modified model, such as $%
Y=v_1(X,D)+(e_2+v_2(X,D)*e_1)$, whereas ours will be.} Nevertheless, we can
apply the method in Section \ref{Identification}\ to identify the average
treatment effect by using the \textit{distribution} of outcomes to find pairs
of $x$ and $\tilde{x}$ such that $v(x,1)=v(\tilde{x},0)$. 

To see how Assumption A-5 holds, assume the range of $v_2(\cdot)$ is positive.
Note that%
\begin{eqnarray*}
F_{g|u}(y;v(x,d)) &=&\Pr\left[ v_{1}(x,d)+v_{2}(x,d)\varepsilon \leq y|U=u%
\right] \\
&=&F_{\varepsilon |u}\left( \frac{y-v_{1}(x,d)}{v_{2}(x,d)}\right)
\end{eqnarray*}%
for $d=0,1$. Suppose the distribution of $\varepsilon $ conditional on $u$ is increasing over $\mathbb{R}$.
Then for all $y$ and $x \in \mathcal{X}^1$ and matched values $\tilde{x} \in \lambda_0(x)$, 
\begin{equation*}
F_{g|u}(y;v(x,1))=F_{g|u}(y;v(\tilde{x},0))
\text{ if and only if }%
\frac{y-v_{1}(x,1)}{v_{2}(x,1)}=\frac{y-v_{1}(\tilde{x},0)}{v_{2}(\tilde{x}%
,0)}\text{.}
\end{equation*}%
Differentiating with respect to $y$ yields $v_{2}(x,1)=v_{2}(\tilde{x},0)$, %
which implies $v_{1}(x,1)=v_{1}(\tilde{x},0)$.%

Next, consider a similar model with continuous \textit{censored} outcomes, where
\begin{equation*}
Y=g(v(X,D),\varepsilon )=max\{v_{1}(X,D)+v_{2}(X,D)\varepsilon,0\} \text{ for } D\in
\{0,1\}.
\end{equation*}
In this case, our identification argument for the uncensored case above applies almost immediately. 
The only adjustment needed is to confine the values of $y$ used for constructing matched pairs $(x,\tilde{x})$ over the \textit{uncensored} segment, i.e. $y > 0$. 
In contrast, other methods for identifying ATE in the uncensored model which rely strictly on the linear structure with multiplicative heterogeneity, such as \citeasnoun{abrevaya2019estimation}, no longer applies. 

\noindent \textbf{Example 2. (Multinomial potential outcome)} Consider a
triangular system where the outcome is multinomial. The multinomial response
model has a long and rich history in both applied and theoretical
econometrics. Recent examples in the semiparametric literature include %
\citeasnoun{lee95}, \citeasnoun{pakesporter2014}, \citeasnoun{ahnetal2018}, \citeasnoun{shietal2018}, %
and \citeasnoun{KOT2019}. None of those papers allow for dummy endogenous variables or potential outcomes.

In this example, let the observed outcome be
\begin{equation*}
Y=g(v(X,D),\varepsilon )=\arg \max_{j=0,1,...,J}y_{j,D}^{\ast },
\end{equation*}%
where 
\begin{equation*}
y_{j,D}^{\ast }=v_{j}(X,D)+\varepsilon _{j} \mbox{ for } j =
1,2,...,J\text{; }y_{0,D}^{\ast }=0\text{.}
\end{equation*}%
In this case, the index $v\equiv (v_{j})_{j\leq J}$ and the errors $%
\varepsilon \equiv (\varepsilon _{j})_{j\leq J}$ are both $J$-dimensional.
The selection equation that determines $D$ is the same as (1.2). In this
case, we can replace $1\{Y\leq y\}$ by $1\{Y=y\}$ in the definition of $%
h_{1},h_{0},h_{1}^{\ast },h_{0}^{\ast }$ and $F_{g|u}(\cdot ;v)$. Then for $%
d=0,1$ and $j\leq J$,\ 
\begin{eqnarray*}
F_{g|u}(j;v(x,d)) &\equiv &E[1\{g(v(x,d),\varepsilon )=j\}|U=u] \\
&=&\Pr \left \{ v_{j}(x,d)+\varepsilon _{j}\geq v_{j^{\prime
}}(x,d)+\varepsilon _{j^{\prime }}\text{ }\forall j^{\prime }\leq J\mid
U=u\right \} \text{.}
\end{eqnarray*}%
By \citeasnoun{ruud2000} and \citeasnoun{ahnetal2018}, the mapping from $%
v\in \mathbb{R}^{J}$ to $(F_{g|u}(j;v):j\leq J)\in \mathbb{R}^{J}$ is smooth
and invertible provided that $\varepsilon \in \mathbb{R}^{J}$ has
non-negative density everywhere. This implies Assumption A-5.\bigskip

\noindent \textbf{Example 3}. \textbf{(Potential outcome from a Roy model)}
Consider a treatment effect model with an endogenous binary treatment $D$
and with the potential outcome determined by a latent Roy model. The Roy
model has also been studied extensively from both applied and theoretical
perspectives. See, for example, the literature survey in %
\citeasnoun{heckvyt-handbook} and the seminal paper in %
\citeasnoun{heckmanhonoreb}.

Here the observed outcome consists of two pieces:\ \ a continuous measure $%
Y=DY_{1}+(1-D)Y_{0}$ and a discrete indicator $W=DW_{1}+(1-D)W_{0}$ for $%
d=0,1$. These potential outcomes are given by 
\begin{equation*}
Y_{d}=\max_{j\in \{a,b\}}y_{j,d}^{\ast }\text{ and }W_{d}=\arg \max_{j\in
\{a,b\}}y_{j,d}^{\ast }
\end{equation*}%
where $a$ and $b$ index potential outcomes realized in different sectors,
with 
\begin{equation*}
y_{j,d}^{\ast }=v_{j}(X,d)+\varepsilon _{j} \mbox{ for } j\in \{a,b\}\text{.}
\end{equation*}%

The binary endogenous treatment $D$ is determined as in 
equation (1.2). For example, $D\in \{1,0\}$ indicates whether an individual
participates in a professional training program, $W_{d}\in \{a,b\}$
indicates the potential sector in which the individual is employed, $%
y^*_{j,d}$ is the potential wage from sector $j$ under treatment $D=d$, and $Y_{d}\in \mathbb{R}$ is the potential wage if the treatment status is $D=d$.\footnote{
	Note this differs from the classical approach that formulates treatment effects using Roy models (e.g. \citeasnoun{heckmanUrzuaVytlacil2006}) in that the potential outcome $Y_d$ itself is determined by a latent Roy model.}

As before, we maintain that $(X,Z)\bot (\varepsilon ,U)$.
The parameter of interest is 
\begin{equation*}
\Pr \{Y_{1}\leq y,W_{1}=a|X\}.
\end{equation*}%
By the independence condition that $(X,Z)\bot (\varepsilon ,U)$ and an application of the law of total probability, this conditional probability can be expressed in terms of directly identifiable quantities and the following counterfactual quantity 
\begin{eqnarray}
&&\Pr \{Y_{1}\leq y,W_{1}=a\mid X=x,Z=z,D=0\}  \notag \\
&=&\Pr \left \{ v_{b}(x,1)+\varepsilon _{b}<v_{a}(x,1)+\varepsilon _{a}\leq
y\mid U\geq P(z)\right \} \text{.}  \label{CF}
\end{eqnarray}%
Again, we seek to identify this counterfactual quantity by finding $\tilde{x}$ such that there exists $\tilde{z} $ with $ (\tilde{x},\tilde{z}) \in Supp(X,Z)$, $P(z) = P(\tilde{z}) $, and
\begin{equation}\label{fullEQ}
v_{a}(x,1)=v_{a}(\tilde{x},0)\text{ and }v_{b}(x,1)=v_{b}(\tilde{x},0)\text{.}  
\end{equation}%
This would allow us to recover the counterfactual conditional probability in  (\ref{CF}) as 
\begin{equation*}
\Pr \{Y_{0}\leq y,W_{0}=a\mid X=\tilde{x},Z=\tilde{z},D=0\} \text{.}
\end{equation*}

To find such a pair of $(x,\tilde{x})$, define $h_{d,W}(x,p,p^{\prime
}),h_{d,W}^{\ast }(x,p)$ by replacing $1\{Y\leq y\}$ with $1\{W=a\}$ in the
definition of $h_{d},h_{d}^{\ast }$ in Section \ref{Identification}.
Similarly, define $h_{d,Y}(x,y,p,p^{\prime }),h_{d,Y}^{\ast }(x,y,p)$ by
replacing $1\{Y\leq y\}$ with $1\{Y\leq y,W=a\}$ in the definition of $%
h_{d},h_{d}^{\ast }$ in Section \ref{Identification}. Then 
\begin{eqnarray*}
h_{d,W}(x,p_{1},p_{2}) &=&\int_{p_{2}}^{p_{1}}\Pr \{v_{b}(x,d)+\varepsilon
_{b}<v_{a}(x,d)+\varepsilon _{a}|U=u\}du\text{;} \\
h_{d,Y}(x,y,p_{1},p_{2}) &=&\int_{p_{2}}^{p_{1}}\Pr \{v_{b}(x,d)+\varepsilon
_{b}<v_{a}(x,d)+\varepsilon _{a}\leq y|U=u\}du\text{;}
\end{eqnarray*}%
and $h_{d,W}(x,p_{1},p_{2})$ and $h_{d,Y}(x,y,p_{1},p_{2})$ are both
identified over their respective domains by construction. 

Assume $(\varepsilon _{a},\varepsilon _{b})$ is continuously distributed with positive density over $\mathbb{R}^{2}$ conditional on all $u$, and the distribution of $ (\varepsilon_0,\varepsilon_1) $ given $U=u$ is continuous in $u $ over $[0,1]$. 
Then the statement 
\begin{eqnarray*}
\text{\textquotedblleft }h_{1,W}(x,p,p^{\prime })=h_{0,W}(\tilde{x}%
,p,p^{\prime })\text{, }h_{1,Y}(x,y,p,p^{\prime })=h_{0,Y}(%
\tilde{x},y,p,p^{\prime })\text{ } 
\text{for all }y\text{ and }p > p^{\prime }\text{ on }\mathcal{P}_{x}\cap \mathcal{P}_{\tilde{x}%
}\text{\textquotedblright }
\end{eqnarray*}%
holds true if and only if (\ref{fullEQ}) holds. 
To see this, first note that matching $%
h_{1,W}(x,p,p^{\prime })=h_{0,W}(\tilde{x},p,p^{\prime })$ requires 
\begin{equation}
v_{a}(x,1)-v_{b}(x,1)=v_{a}(\tilde{x},0)-v_{b}(\tilde{x},0)\text{,}
\label{CF1}
\end{equation}
while matching $h_{1,Y}(x,y,p,p^{\prime })=h_{0,Y}(\tilde{x},y,p,p^{\prime
}) $ at the same time requires 
\begin{equation}
v_{a}(x,1)=v_{a}(\tilde{x},0)\text{.}  \label{CF2}
\end{equation}%
Thus requiring (\ref{CF1}) and (\ref{CF2}) to hold jointly is equivalent to (\ref{fullEQ}).

It is worth mentioning that the identification strategy above also applies in a more general setup where the error term in the potential outcome is specific to the treatment, i.e., with $\varepsilon_j$ replaced by $\varepsilon_{j,d}$ in $y^*_{j,d}$. 
In such cases, the argument above remains valid under a ``rank similarity'' condition (that the marginal distribution of $\varepsilon_{j,d}$ is the same for $d\in\{0,1\}$).
The rank similarity condition has been used for identifying treatment effects in instrumental quantile regression models such as  \citeasnoun{chernozhukovhansenjoe}.

\setcounter{equation}{0}

\section{Extension}

\label{Extension}

The identification strategy we use requires finding matched pairs for $x$ in $\mathcal{X}^1$ and $\mathcal{X}^0$. 
In some cases, with the outcome being continuous, we can construct similar arguments for identifying a counterfactual quantity in a treatment effect model by matching different elements on the support of continuous outcomes.
To the best of our knowledge, this approach has not been explored in the literature on the effects of endogenous treatments. 
The following example illustrates this point. 

\noindent \textbf{Example 4.} \textbf{(Potential outcome with random
coefficients)} Random coefficient models are prominent in both the
theoretical and applied econometrics literature. They permit a flexible way to allow for conditional heteroscedasticity and unobserved heterogeneity.
For a survey and recent developments, see \citeasnoun{hsiaochapter}, \citeasnoun{hoderlein2010analyzing}, \citeasnoun{arellano2012identifying}, and \citeasnoun{masten2018random}. 

We consider a treatment effect model where the potential outcome is determined through random coefficients: 
\begin{equation*}
Y=DY_{1}+(1-D)Y_{0}\text{ where }Y_{d}=(\alpha _{d}+X^{\prime }\beta _{d})%
\text{ for }d=0,1
\end{equation*}%
and the binary endogenous treatment $D$ is determined as in equation (1.2). The \textit{random} intercepts $\alpha _{d}\in \mathbb{R}$
and the \textit{random} vectors of coefficients $\beta _{d}$ are given by 
\begin{equation*}
\alpha _{d}=\bar{\alpha}_{d}(X)+\eta _{d}\text{ and }\beta _{d}=\bar{\beta}%
_{d}(X)+\varepsilon _{d}
\end{equation*}%
where for any $x \in Supp(X)$ and $d \in \{0,1\}$, $(\bar{\alpha}_{d}(x),\bar{\beta}%
_{d}(x))\in \mathbb{R}^{K+1}$ is a vector of constant parameters while $\eta
_{d}\in \mathbb{R}$ and $\varepsilon _{d}\in \mathbb{R}^{K}$ are
unobservable noises.

As before, suppose some elements of $Z$ in the treatment equation are excluded from $X$. We allow the vector of unobservable terms $(\varepsilon_{1},\varepsilon _{0},\eta _{0},\eta _{1},U)$ to be arbitrarily correlated, and assume:%
\begin{equation}
\left( X,Z\right) \text{ }\bot \text{ }(\varepsilon _{1},\varepsilon _{0},\eta
_{0},\eta _{1},U)\text{.}  \label{indep}
\end{equation}%
Aslo, normalize the marginal distribution of $U$  to standard uniform, so that $\theta (Z)$ is directly identified as $P(Z)\equiv E(D|Z)$.

Our goal is to identify the distribution of potential outcomes $%
Y_{d}$ given $X=x$ for $d=0,1$. 
From this result, we can recover other parameters of interest such as average treatment effects, quantile treatment effects, etc.
As a preliminary step, we start by pinpointing a counterfactual item that is crucial for this identification question.

Let $G_{P|x}$ denote the distribution of $P\equiv P(Z)$ given $X=x$ (recall that its support is denoted as $\mathcal{P}_x$). 
This conditional distribution is directly identifiable from the data-generating process. 
By construction,%
\begin{equation*}
\Pr \{Y_{1}\leq y|X=x\}=\int \Pr \{Y_{1}\leq y|X=x,P=p\}dG_{P|x}(p)\text{,}
\end{equation*}%
where%
\begin{eqnarray} \label{dist_ltp}
&&\Pr \{Y_{1}\leq y|X=x,P=p\} \notag \\
&=&E\left[ D1\{Y_{1}\leq y\}|X=x,P=p\right] +E\left[ (1-D)1\{Y_{1}\leq
y\}|X=x,P=p\right] \text{.}
\end{eqnarray}%
The first term on the right-hand side of (\ref{dist_ltp}) is identified as%
\begin{equation*}
E[D1\{Y\leq y\}|X=x,P=p]\text{.}
\end{equation*}%
The second term on the right-hand side of (\ref{dist_ltp}), denoted by $\phi_0(x,y,p)$ , is counterfactual and can be written as%
\begin{eqnarray*}
\phi _{0}(x,y,p)&\equiv& E[1\{U\geq P\}1\{\alpha _{1}+X^{\prime }\beta
_{1}\leq y\}|X=x,P=p] \\
&=&E[1\{U\geq p\}1\{\bar{\alpha}_{1}(x)+\eta _{1}+x^{\prime }(\bar{\beta}%
_{1}(x)+\varepsilon _{1})\leq y\}] \\
&=&\int_{p}^{1}\Pr \{\eta _{1}+x^{\prime }\varepsilon _{1}\leq y-\bar{\alpha}%
_{1}(x)-x^{\prime }\bar{\beta}_{1}(x)|U=u\}du\text{.}
\end{eqnarray*}%
Hence identification of the conditional distribution of $Y_1$ amounts to identification of $\phi_0(\cdot) $.

As noted at the beginning of this section, we will identify the counterfactual $\phi_0(\cdot)$ by finding matched elements on the support of the observed outcome $Y$. 
This takes two steps.
First, we show that if for a given pair $(x,y)$ one can find $t(x,y)$ such that
\begin{equation}\label{defn_txy}
y-\bar{\alpha}_{1}(x)-x^{\prime }\bar{\beta}_{1}(x)=t(x,y)-\bar{\alpha}%
_{0}(x)-x^{\prime }\bar{\beta}_{0}(x)\text{,}
\end{equation}%
then one can use $t(x,y)$ to identify $\phi_0(x,y,p)$ for any $p\in \mathcal{P}_x$.
Specifically, for any $p$ on the support of $P$ given $X=x$, define%
\begin{eqnarray*}
h_{1}^{\ast }(x,y,p)&\equiv& E\left[ D1\left\{ Y\leq y\right\} |X=x,P=p%
\right]  \\
&=&E\left[ 1\left\{ U<P\right\} 1\left\{ \alpha _{1}+X^{\prime }\beta
_{1}\leq y\right\} |X=x,P=p\right] \\
&=&E\left[ 1\left\{ U<p\right\} 1\left\{
\alpha _{1}+x^{\prime }\beta _{1}\leq y\right\} \right]  \\
&=&\int_{0}^{p}\Pr \{\eta _{1}+x^{\prime }\varepsilon _{1}\leq y-\bar{\alpha}%
_{1}(x)-x^{\prime }\bar{\beta}_{1}(x)|U=u\}du,
\end{eqnarray*}%
where the second equality uses (\ref{indep}). Likewise, under (\ref{indep})
we have: 
\begin{eqnarray*}
h_{0}^{\ast }(x,y,p) &\equiv& E\left[ (1-D)1\left\{ Y\leq y\right\} |X=x,P=p%
\right]  \\
&=&\int_{p}^{1}\Pr \{\eta _{0}+x^{\prime }\varepsilon _{0}\leq y-\bar{\alpha}%
_{0}(x)-x^{\prime }\bar{\beta}_{0}(x)|U=u\}du\text{.}
\end{eqnarray*}%
Assume\footnote{%
Such a distributional equality condition has been used to motivate the \emph{rank similarity}
condition imposed frequently in the econometrics literature -- see, for example, %
\citeasnoun{chernozhukovhansen}, \citeasnoun{vytlacilyildiz}, \citeasnoun{chenstackhan}, \citeasnoun{franlef}, \citeasnoun{dongshen}.}
\begin{equation}
F_{\left( \eta _{1},\varepsilon _{1}\right) |U=u}=F_{\left( \eta _{0},\varepsilon
_{0}\right) |U=u}\text{ for all }u\in \lbrack 0,1]\text{.}  \label{eqCF}
\end{equation}%
Under (\ref{eqCF}), we have%
\begin{equation}
\phi _{0}(x,y,p)=\int_{p}^{1}\Pr \{ \eta _{0}+x^{\prime }\varepsilon _{0}\leq y-%
\bar{\alpha}_{1}(x)-x^{\prime }\bar{\beta}_{1}(x)|U=u\}du\text{.}
\label{swapEps}
\end{equation}%
Then by definition of $t(x,y)$ in (\ref{defn_txy}),%
\begin{eqnarray*}
h_{0}^{\ast }(x,t(x,y),p) &\equiv &\int_{p}^{1}\Pr \{ \eta _{0}+x^{\prime
}\varepsilon _{0}\leq t(x,y)-\bar{\alpha}_{0}(x)-x^{\prime }\bar{\beta}%
_{0}(x)|U=u\}du \\
&=&\int_{p}^{1}\Pr \{ \eta _{0}+x^{\prime }\varepsilon _{0}\leq y-\bar{\alpha}%
_{1}(x)-x^{\prime }\bar{\beta}_{1}(x)|U=u\}du \\
&=&\phi _{0}(x,y,p).
\end{eqnarray*}%
Thus the counterfactual $\phi _{0}(x,y,p)$ would be identified as $h_{0}^{\ast }(x,t(x,y),p)$.

The second step is to show that for each pair $(x,y)$ we can indeed uniquely recover $t(x,y) $ using quantities that are identifiable in the data-generating process. 
To do so, we define two auxiliary functions as follows: for $p_{1}>p_{2}$ on the support of $P$ given $X=x$, let 
\begin{eqnarray*}
h_{1}(x,y,p_{1},p_{2}) &\equiv &h_{1}^{\ast }(x,y,p_{1})-h_{1}^{\ast
}(x,y,p_{2}) \\
&=&\int_{p_{2}}^{p_{1}}\Pr \{ \eta _{1}+x^{\prime }\varepsilon _{1}<y-\bar{%
\alpha}_{1}(x)-x^{\prime }\bar{\beta}_{1}(x)|U=u\}du\text{;}
\end{eqnarray*}%
and%
\begin{eqnarray*}
h_{0}(x,y,p_{1},p_{2}) &\equiv &h_{0}^{\ast }(x,y,p_{2})-h_{0}^{\ast
}(x,y,p_{1}) \\
&=&\int_{p_{2}}^{p_{1}}\Pr \{ \eta _{0}+x^{\prime }\varepsilon _{0}<y-\bar{%
\alpha}_{0}(x)-x^{\prime }\bar{\beta}_{0}(x)|U=u\}du\text{.}
\end{eqnarray*}%
Suppose $\eta _{d}+x^{\prime }\varepsilon _{d}$ is continuously distributed
over $\mathbb{R}$ for all values of $x$ conditional on all $u \in [0,1]$.
Then for any fixed pair $(x,y)$ and $p_{1}>p_{2}$,%
\begin{equation*}
h_{1}(x,y,p_{1},p_{2})=h_{0}(x,t(x,y),p_{1},p_{2})
\end{equation*}%
if and only if%
\begin{equation*}
t(x,y)=y-\bar{\alpha}_{1}(x)-x^{\prime }\bar{\beta}_{1}(x)+\bar{\alpha}%
_{0}(x)+x^{\prime }\bar{\beta}_{0}(x)\text{.}
\end{equation*}%

\setcounter{equation}{0}

\section{Estimation}

\label{estimation}

In this section, we outline estimation procedures from a random sample of the observed variables that are motivated by our identification results. 
We first describe an estimation procedure for the parameter $E[Y_1]$ in the first three examples. 
Recall $\mathcal{P}_{x}$ denotes the support of $P(Z) \equiv P $ given $X=x$. 
Let $f_P(.|x)$ denote the density of $P(Z)$ given $X=x$, and define
\begin{equation*}
\mathcal{P}_{x}^{c} \equiv \left \{ p\text{: }f_{P}(p|x)>c\right \} \mbox{ for a known } c>0.
\end{equation*}%
For simplicity, assume 
\begin{equation*}
1-c_{0}>P(Z)>c_{0} \mbox{ for a known } c_0 > 0 \mbox{ almost surely}.
\end{equation*}%
Define a measure of distance between $h_{1}(x_{1},\cdot )$ and $%
h_{0}(x_{0},\cdot )$ as follows:
\begin{eqnarray*}
&&\left \Vert h_{1}(x_{1},\cdot )-h_{0}(x_{0},\cdot )\right \Vert \\
&=&\left \{ \int \int \int \left( \int_{p_{2}}^{p_{1}} (F_{g|u}(y;v(x_1,1))
- F_{g|u}(y;v(x_0,0)) du \right)^{2} I \left( p_{1},p_{2}\in
\mathcal{P}_{x}^{c}\right) w(y)dydp_{1}dp_{2}\right \}^{1/2}
\end{eqnarray*}
where $w(y)$ is a chosen weight function. 

Consider the case when $h_{0}(x,y,p_{1},p_{2})$, $h_{1}(x,y,p_{1},p_{2})$ and $P(z)$ are known. For any given $x_{i}$, let $\tilde x_{i}$ be such that%
\begin{equation*}
\left \Vert h_{0}(\tilde x_{i},\cdot ) - h_{1}(x_{i},\cdot ) \right \Vert =0,
\end{equation*}
which, under Assumption A-5 in Section \ref{Identification}, is equivalent to 
\begin{equation*}
v(\tilde x_{i},0) = v(x_{i},1).
\end{equation*}%
Let $P_i$ be shorthand for $P(Z_i)$ and define
\begin{equation*}
Y_{i}^{*} \equiv E[Y|D=0, \| h_0( X,\cdot) - h_1(X_i,\cdot) \|=0,P=P_i].
\end{equation*}%
This conditional expectation equals $E[ Y | D=0, v( X , 0 ) = v( X_i , 1), P=P_i] $, which in turn equals $E( Y_1 | D=0, X=X_i, P=P_i) $. 

The parameter of interest $\Delta \equiv E[Y_{1}]$ can be written as $ \Delta = E[D_iY_i+(1-D_i)Y^*_i]$.
Therefore, we estimate $\Delta$ by its sample analog, after replacing $Y^*_i$ with its Nadaraya-Watson estimates. 
That is, we estimate $\Delta$ by
\begin{equation*}
\hat{\Delta}=\frac{1}{n}\sum_{i=1}^{n}\left( D_{i}Y_{i}+(1-D_{i})\hat{Y}%
_{i}\right),
\end{equation*}%
where $\hat{Y}_i$ is a kernel regression estimator of $Y^*_i$, using knowledge of $h_0,h_1$ and $P(Z)$.  
Likewise, for estimating the conditional mean $E(Y_1|X \in A)$ where $A$ is a generic subset of the support of $X$, we use a weighted version%
\begin{equation*}
\hat{\Delta}_{A}=\frac{\frac{1}{n}\sum_{i=1}^{n}1\left \{ X_{i}\in A\right
\} \left( D_{i}Y_{i}+(1-D_{i})\hat{Y}_{i}\right) }{\frac{1}{n}%
\sum_{i=1}^{n}1\left \{ X_{i}\in A\right \} }.
\end{equation*}%
Limiting distribution theory for each of these estimators follows from
identical arguments in \citeasnoun{vytlacilyildiz}. Here we formally state
the theorem for the first estimator:

\begin{theorem}
\label{theorem1} Suppose Assumptions (A-1) to (A-6) hold, and $Y_1$ has positive and finite second moments. Then
\begin{equation*}
\sqrt{n}(\hat \Delta-\Delta)\overset{d}{\rightarrow} \mathbb{N}(0,\Sigma),
\end{equation*}
where 
\begin{equation*}
\Sigma=Var(E[Y_1|X,P,D])+E[PVar(Y_1|X,P,D=1)].
\end{equation*}
\end{theorem}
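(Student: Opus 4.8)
The plan is to treat $\hat\Delta$ as a two-step semiparametric estimator and split the scaled error into an ``oracle'' part that already carries the stated variance and a nonparametric first-stage part that must then be controlled. Write $g^\ast(x,p)\equiv E[Y_1\mid X=x,P=p,D=0]$ for the regression function that $\hat Y_i$ targets, so that $Y_i^\ast\equiv g^\ast(X_i,P_i)$ is the probability limit of $\hat Y_i$ (this is the content of the identification step preceding the theorem). I would then decompose
\begin{equation*}
\sqrt n(\hat\Delta-\Delta)=\frac{1}{\sqrt n}\sum_{i=1}^n\big(\zeta_i-\Delta\big)+\sqrt n\,T_n,\qquad \zeta_i\equiv D_iY_i+(1-D_i)Y_i^\ast,\quad T_n\equiv\frac1n\sum_{i=1}^n(1-D_i)\big(\hat Y_i-Y_i^\ast\big).
\end{equation*}

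First I would dispatch the oracle term. Using Theorem~\ref{theorem2.1} together with $(X,Z)\perp(\varepsilon,U)$ and $E[D\mid X,P]=P$, one checks that $E[\zeta_i]=E[D\,Y_1]+E[(1-D)Y_1]=\Delta$, so the first sum is a centered i.i.d.\ average and the Lindeberg--L\'evy central limit theorem applies. To identify its variance with the stated $\Sigma$, I would use the orthogonal decomposition
\begin{equation*}
\zeta_i=E[Y_1\mid X_i,P_i,D_i]+D_i\big(Y_{1i}-E[Y_1\mid X_i,P_i,D_i=1]\big),
\end{equation*}
in which the second summand has conditional mean zero given $(X_i,P_i,D_i)$ and is therefore uncorrelated with the first. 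Hence $Var(\zeta_i)=Var\big(E[Y_1\mid X,P,D]\big)+Var\big(D(Y_1-E[Y_1\mid X,P,D=1])\big)$, and since $D^2=D$ the last term equals $E\big[P\,Var(Y_1\mid X,P,D=1)\big]$; together these are exactly $\Sigma$.

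The substantive work is controlling $\sqrt n\,T_n$. Because $\hat Y_i$ is a Nadaraya--Watson regression of the observed $Y$ on the matching variables over the $D=0$ subsample (constructed from the known $h_0,h_1,P$), I would linearize it by the standard U-statistic/H\'ajek projection: write $\hat Y_i-Y_i^\ast$ as a kernel-weighted average of residuals, exchange the order of summation, and reduce $T_n$ to a single sample average plus a remainder. The remainder is handled with the paper's regularity apparatus---a higher-order kernel with bandwidth $h$ satisfying $nh^{k}\to\infty$ (with $k$ the dimension of the smoothing variables) and undersmoothing $nh^{2m}\to0$ (with $m$ the order of smoothness) to annihilate the bias, smoothness of $g^\ast$ and of the relevant densities, finiteness of the second moment of $Y_1$, and the trimming built into $\mathcal P_x^c$ and the bounds $c_0<P(Z)<1-c_0$, which keep the kernel denominators bounded away from zero. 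This is precisely the step the paper imports from the ``identical arguments in \citeasnoun{vytlacilyildiz}.''

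I expect this last step to be the main obstacle, for two reasons. First, the evaluation (query) points, indexed by $v(X_i,1)$, and the data points, indexed by $v(X_j,0)$, follow different distributions, so a density-ratio weight enters the projection and must be tracked carefully through the bias--variance bookkeeping and the trimming. Second, and more fundamentally, the stated $\Sigma$ is exactly the oracle variance computed above, so proving the theorem as stated requires arguing that the first-stage term does not perturb the first-order asymptotics---equivalently, that the moment condition is insensitive to first-stage perturbations. Since the plug-in moment $D_iY_i+(1-D_i)\hat Y_i$ is not manifestly Neyman-orthogonal, verifying this is the delicate point at which the argument must follow \citeasnoun{vytlacilyildiz} closely; once it is in place, Slutsky's theorem combines the two pieces and delivers $\sqrt n(\hat\Delta-\Delta)\overset{d}{\rightarrow}\mathbb N(0,\Sigma)$.
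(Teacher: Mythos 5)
Your oracle-term analysis is correct and is the real content behind the stated formula: with $\zeta_i=E[Y_1|X_i,P_i,D_i]+D_i\bigl(Y_{1i}-E[Y_1|X_i,P_i,D_i=1]\bigr)$, the two pieces are uncorrelated, $E[\zeta_i]=\Delta$ follows from $E[D|X,P]=P$ and iterated expectations, and $D^2=D$ gives $Var\bigl(D(Y_1-E[Y_1|X,P,D=1])\bigr)=E[P\,Var(Y_1|X,P,D=1)]$, so $Var(\zeta_i)=\Sigma$ exactly and Lindeberg--L\'evy applies under the theorem's moment condition. Be aware, however, that there is essentially nothing in the paper to compare this against: the paper's entire ``proof'' is the sentence that the limiting theory follows from identical arguments in \citeasnoun{vytlacilyildiz}. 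Your proposal is already more explicit than the paper's own treatment.

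The step you flag as delicate --- showing $\sqrt{n}\,T_n=o_p(1)$ --- is the genuine gap, and your suspicion about it is well founded: it is not merely hard, it is generically false for a Nadaraya--Watson imputation plugged into a sample average. By the standard projection argument, $\sqrt{n}\,T_n$ has a non-degenerate mean-zero component of the form $n^{-1/2}\sum_j (1-D_j)\bigl(Y_j-E[Y_j|X_j,P_j,D_j=0]\bigr)\,\omega_j$, where $\omega_j$ is the density ratio between the query points $\bigl(v(X_i,1),P_i\bigr)$ and the data points $\bigl(v(X_j,0),P_j\bigr)$; undersmoothing and trimming kill the bias but not this term, and since it is uncorrelated with your oracle summand (the $D_j(1-D_j)=0$ and conditional-mean-zero arguments you already used), the asymptotic variance of $\hat\Delta$ is $\Sigma+E\bigl[(1-P)\,Var(Y_0|X,P,D=0)\,\omega^2\bigr]$, strictly larger than $\Sigma$ unless the control outcomes are conditionally degenerate. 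The familiar benchmark makes the same point: under unconfoundedness the imputation estimator of $E[Y_1]$ has variance $Var\bigl(g_1(X)\bigr)+E\bigl[Var(Y|X,D=1)/e(X)\bigr]$, not the oracle value $Var\bigl(g_1(X)\bigr)+E\bigl[e(X)Var(Y|X,D=1)\bigr]$. So the plan of ``following Vytlacil--Yildiz closely'' cannot close this step as the theorem is stated: a correct argument must either add conditions under which the first-stage projection term is degenerate or restate $\Sigma$ to include it. In short, your decomposition and variance algebra are right, but the negligibility claim your proof needs is exactly what fails; this is a defect of the stated result (and of the paper's citation-only proof), not something your argument overlooked.
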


Next, we describe an estimation procedure for the distributional treatment effect in Example 4, where potential outcomes depend on random coefficients. 
In this case, the parameter of interest is, for a given value $y \in \mathbb{R}$, 
\begin{equation*}
\Delta_2(y)=\Pr \{Y_{1}\leq y\} \text{.}
\end{equation*}%
For fixed values of $y$ and $p_{1}>p_{2}$, we propose to estimate $t(x,y)$ as 
\begin{equation*}
\hat{t}(x,y,p_{1},p_{2})=\arg
\min_{t}(h_{1}(x,y,p_{1},p_{2})-h_{0}(x,t,p_{1},p_{2}))^{2},
\end{equation*}%
and then average over values of $p_{1},p_{2}$: 
\begin{equation*}
\hat{\tau}(x,y)=\frac{1}{n(n-1)}\sum_{i\neq j}I[P_{i}>P_{j}]\hat{t}%
(x,y,P_{i},P_{j}).
\end{equation*}%
An infeasible estimator for $\Delta_2(y)$, which assumes $%
t(x,y)$ is known, would be%
\begin{equation*}
\hat{\Delta}_2(y)=\frac{1}{n}\sum_{i=1}^{n} \left( D_{i}1\{Y_{i}\leq
y\}+(1-D_{i})1\{Y_{i}\leq t(X_{i},y)\} \right) \text{.}
\end{equation*}%
In practice, for feasible estimation, one needs to replace $t(x,y)$ with its estimator $\hat{\tau}(x,y)$.


We conclude this section with some discussion about the computational aspects.
The computational costs for finding the suitable pairs $(x,\tilde{x})$ are modest in comparison with typical semi- or nonparametric methods in the literature. 
To see this, consider the case with $J=2$, where $v=(v_{1},v_{2})\in \mathbb{R}^{2}$ consists of two index functions $v_{j}(x,d):\mathbb{R}^{K}\times \{0,1\}\rightarrow R$ for $j=1,2$. 
The actual dimensions that matter in implementation are: (i) $K$ in the estimation of $h_{1}$ and $h_{0}$ in the first stage, and (ii)  the dimension of the indexes to be matched in the second stage: $J=2$. 
Therefore, the dimensionality that causes  difficulty in estimation is $\max \left\{J,K\right\} $, not $J\times K$. 
In our view, this is no more difficult than implementing estimators that are based on matching or pairwise comparison, such as Blundell and Powell (2003).

\setcounter{equation}{0}

\section{Simulation Study}

\label{simulations}

This section presents simulation evidence for the performance of the
estimator in Section \ref{estimation}, for both the Average Treatment Effect and the Distributional Treatment Effect.

\subsection{Designs with continuous outcomes}

We report results for both our estimator and that in \citeasnoun{vytlacilyildiz}, for several designs where the potential outcomes are real-valued and continuous. 
These include designs where the monotonicity condition fails, and designs where the disturbance terms in the outcome equation are multi-dimensional.

Throughout all designs, we model the treatment or dummy endogenous variable
as 
\begin{equation*}
D=I[Z-U>0],
\end{equation*}
where $Z,U$ are independent standard normal. We experiment with the
following designs for the outcome.
\begin{description}

\item[(Design 1)] $Y=X+0.5\cdot D +\varepsilon$, where $X$ is standard normal while $(\varepsilon, U)$ are bivariate normal, with mean 0, variance 1, and correlation $\rho_v \in \{0, 0.25, 0.5\}$.


\item[(Design 2)] $ Y=X+0.5\cdot D+(X+D)\cdot \varepsilon $, where $X$ is standard normal while $(\varepsilon, U)$ are bivariate normal, with mean 0, variance 1, and correlation $\rho_v \in \{0, 0.25, 0.5\}$.

\item[(Design 3)] $ Y=(X+0.5 \cdot D+\varepsilon)^2$, where $X$ is standard normal while $(\varepsilon, U)$ are bivariate normal, each with mean 0, variance 1, and correlation $\rho_v \in \{0, 0.25, 0.5\}$

\end{description}

We note that the monotonicity condition holds in Design 1 but fails in the other two designs. For each of these designs, we report results for estimating $E[Y_1]$, i.e., the mean potential outcome under treatment $D=1$. 
The two estimators reported in the simulation study are our estimator proposed in Section \ref{estimation} and the one proposed in \citeasnoun{vytlacilyildiz}. 
The summary statistics, scaled by the true parameter value, Mean Bias, Median Bias, Root Mean Squared Error, (RMSE), and Median Absolute Deviation (MAD) are evaluated for sample sizes of $n = 100, 200, 400$ for 401 replications. 

Results for each of these designs are reported in Tables 1 to 3 respectively. In implementing our estimator, we assume the propensity score function is known, and conduct next stage estimation using a nonparametric kernel estimator with normal kernel function, and a bandwidth of $n^{-1/5}$. This rate reflects ``undersmoothing" as there are two regressors, the propensity score and the regressor $X$. For the estimator in \citeasnoun{vytlacilyildiz}, which involves the derivative of conditional expectation functions as well,  we also report results for an infeasible version of their estimator, assuming such functions, as well as the propensity scores, are known.
To implement the second stage of our estimator, in calculating the distance $\|h_1(x_i,\cdot)-h_0(x_0,\cdot)\|$ we used an evenly spaced grid of values for $y$, and selected $n/50$ grid points, with $n $ denoting the sample size. 

The results indicate the desirable properties of our estimator, generally agreeing with Theorem \ref{theorem1}. In all designs, our estimator has small values for bias and RMSE, with the value of RMSE decreasing as the sample size grows. 
In contrast, the procedure based on \citeasnoun{vytlacilyildiz} only performs well in Design 1, with the sizes of bias and RMSE comparable to those using our method. 
As in our estimator, these values decrease as the sample size grows, which is expected, as the monotonicity condition they require is satisfied in this design. 
In this case, their approach has smaller standard errors largely due to the relatively simpler structure of the infeasible version of the estimator, but their biases persist even when the sample size increases.

For Designs 2 and 3, where the monotonicity condition is violated, the estimator proposed in \citeasnoun{vytlacilyildiz} does not perform well. Table 2 shows that in Design 2 both the bias and RMSE of their estimator are generally decreasing slowly with the sample size.
Results for their estimator are better in Design 3 in Table 3, but the bias hardly converges with the sample size, and is much larger compared to our estimator.


\begin{center}

\scalefont{0.75}{

Table 1

\begin{tabular}{||l|ccc|ccc||}
\hline
\  & \  & CKT &  &  & VY &  \\ \hline\hline
$\rho _{v}$ & $0$ & $1/4$ & $1/2$ & $0$ & $1/4$ & $1/2$ \\ \hline
n=100 & \  & \  & \  & \  & \  &  \\ 
MEAN BIAS & -0.0170 & 0.0229 & -0.0435 & -0.1302 & -0.1676 & -0.2018 \\ 
MEDIAN BIAS & -0.0137 & 0.0124 & -0.0653 & -0.1318 & -0.1678 & -0.2087 \\ 
RMSE & 0.4936 & 0.4800 & 0.4945 & 0.3308 & 0.3337 & 0.3546 \\ 
MAD & 0.3289 & 0.3328 & 0.3156 & 0.2200 & 0.2271 & 0.2546 \\ 
n=200 & \  & \  & \  & \  & \  &  \\ 
MEAN BIAS & 0.0032 & -0.0024 & -0.0069 & -0.0864 & -0.1299 & -0.1766 \\ 
MEDIAN BIAS & -0.0102 & -0.0141 & -0.0314 & -0.0934 & -0.1277 & -0.1679 \\ 
RMSE & 0.3355 & 0.3367 & 0.3521 & 0.2293 & 0.2457 & 0.2711 \\ 
MAD & 0.2240 & 0.2228 & 0.2517 & 0.1594 & 0.1676 & 0.1865 \\ 
n=400 & \  & \  & \  & \  & \  &  \\ 
MEAN BIAS & -0.0187 & 0.0101 & -0.0055 & -0.0584 & -0.11134 & -0.1593 \\ 
MEDIAN BIAS & -0.0261 & 0.0128 & -0.0065 & -0.0592 & -0.1162 & -0.1572 \\ 
RMSE & 0.2496 & 0.2489 & 0.2578 & 0.2049 & 0.1867 & 0.2167 \\ 
MAD & 0.1523 & 0.1732 & 0.1659 & 0.1197 & 0.1345 & 0.1605 \\ \hline\hline
\end{tabular}

\bigskip \bigskip 

Table 2

\begin{tabular}{||l|ccc|ccc||}
\hline
\  & \  & CKT &  &  & VY &  \\ \hline\hline
$\rho _{v}$ & $0$ & $1/4$ & $1/2$ & $0$ & $1/4$ & $1/2$ \\ \hline
n=100 & \  & \  & \  & \  & \  &  \\ 
MEAN BIAS & 0.0109 & 0.0397 & -0.0671 & -0.1509 & -0.2875 & -0.4207 \\ 
MEDIAN BIAS & 0.0151 & 0.0227 & -0.0939 & -0.1590 & -0.2918 & -0.4262 \\ 
RMSE & 0.5089 & 0.2737 & 0.4853 & 0.3524 & 0.4199 & 0.5289 \\ 
MAD & 0.3395 & 0.2447 & 0.3105 & 0.2419 & 0.30898 & 0.4310 \\ 
n=200 & \  & \  & \  & \  & \  &  \\ 
MEAN BIAS & 0.0322 & 0.0143 & -0.0311 & -0.1273 & -0.2559 & -0.3875 \\ 
MEDIAN BIAS & 0.0159 & 0.0054 & -0.0543 & -0.1310 & -0.2553 & -0.3884 \\ 
RMSE & 0.3487 & 0.3444 & 0.3455 & 0.2622 & 0.3407 & 0.4475 \\ 
MAD & 0.2317 & 0.2297 & 0.2552 & 0.1782 & 0.2624 & 0.3884 \\ 
n=400 & \  & \  & \  & \  & \  &  \\ 
MEAN BIAS & 0.0088 & 0.0269 & -0.0294 & -0.0962 & -0.2247 & -0.3708 \\ 
MEDIAN BIAS & 0.0007 & 0.0244 & -0.0309 & -0.0982 & -0.2255 & -0.3769 \\ 
RMSE & 0.2578 & 0.2557 & 0.2549 & 0.1920 & 0.2764 & 0.4037 \\ 
MAD & 0.1649 & 0.1733 & 0.1606 & 0.1354 & 0.2283 & 0.3769 \\ \hline\hline
\end{tabular}

\pagebreak 

Table 3

\begin{tabular}{||l|ccc|ccc||}
\hline
\  & \  & CKT &  &  & VY &  \\ \hline\hline
$\rho _{v}$ & $0$ & $1/4$ & $1/2$ & $0$ & $1/4$ & $1/2$ \\ \hline
n=100 & \  & \  & \  & \  & \  &  \\ 
MEAN BIAS & -0.0097 & -0.0070 & 0.0019 & -0.0691 & -0.0898 & -0.1066 \\ 
MEDIAN BIAS & -0.0233 & -0.0101 & -0.0240 & -0.0799 & -0.0925 & -0.1178 \\ 
RMSE & 0.1893 & 0.2085 & 0.2126 & 0.1546 & 0.1630 & 0.1701 \\ 
MAD & 0.1398 & 0.1342 & 0.1374 & 0.1125 & 0.1178 & 0.1315 \\ 
n=200 & \  & \  & \  & \  & \  &  \\ 
MEAN BIAS & -0.0108 & -0.0069 & -0.0068 & -0.0609 & -0.0765 & -0.0968 \\ 
MEDIAN BIAS & -0.0148 & -0.0033 & -0.0099 & -0.0674 & -0.0769 & -0.1017 \\ 
RMSE & 0.1372 & 0.1434 & 0.1424 & 0.1163 & 0.1262 & 0.1369 \\ 
MAD & 0.0949 & 0.0989 & 0.0953 & 0.0855 & 0.0887 & 0.1078 \\ 
n=400 & \  & \  & \  & \  & \  &  \\ 
MEAN BIAS & -0.0073 & -0.0014 & -0.0026 & -0.0583 & -0.0725 & -0.0889 \\ 
MEDIAN BIAS & -0.0149 & -0.0023 & -0.0029 & -0.0610 & -0.0751 & -0.0887 \\ 
RMSE & 0.1084 & 0.0994 & 0.0989 & 0.0924 & 0.1007 & 0.1131 \\ 
MAD & 0.0697 & 0.0685 & 0.0654 & 0.0689 & 0.0788 & 0.0901 \\ \hline\hline
\end{tabular}

}

\end{center}


We also report estimator performance in samples simulated from a model where potential outcomes are determined by random coefficients and dummy endogenous variables. 
It is important to note that for this design, the estimator in \citeasnoun{vytlacilyildiz} does not apply. 
This is because different values of $x$ lead to different distributions of the composite error $\eta_d + x^{\prime }\epsilon_d$. 
Our contribution in Section \ref{Extension} is to propose a new approach based on matching different values of the observed outcome $y$, rather than the exogenous covariates $x$. 
Based on the counterfactual framework discussed in Section \ref{Extension}, here the treatment variable $D$ is modeled as the same way as in the first three designs, with the regressor $X$ being standard normal. 
For both $Y_0,Y_1$, the intercepts were modeled as constants (0 and 1, respectively) and the additive error terms were each standard normal. 
For the random slopes, the means were 1 and 2 respectively, and the additive error terms were also standard normal, independent of all other disturbance terms and each other.
Here we use the procedure in Section \ref{Extension} to estimate the parameter $\Delta_2=P(Y_1<y)$, where in the simulation, we set $y=1$.  

Results for this design with random coefficients are reported in Table 4.
The same four summary statistics are reported for sample sizes $n \in \{100,200,400\}$, based on 401 replications.
The estimator proposed in Section \ref{estimation} performs well.
The bias and RMSE are much smaller for a bigger sample with $n=400$ than for smaller samples with $n = 100$ and $n=200$, indicating convergence at the parametric rate. 


\pagebreak 

\begin{center}

{\scalefont{0.75}

Table 4

\begin{tabular}{||l|ccc||}
\hline
\  & CKT &  &  \\ \hline\hline
$\rho _{v}$ & $0$ & $1/4$ & $1/2$ \\ \hline
n=100 & \  & \  &  \\ 
MEAN BIAS & 0.0109 & -0.0086 & 0.0038 \\ 
MEDIAN BIAS & 0.0000 & -0.0064 & 0.0126 \\ 
RMSE & 0.1011 & 0.0979 & 0.0955 \\ 
MAD & 0.0600 & 0.0648 & 0.0652 \\ 
n=200 & \  & \  & \  \\ 
MEAN BIAS & -0.0050 & -0.0150 & 0.0095 \\ 
MEDIAN BIAS & -0.0100 & -0.0161 & 0.0029 \\ 
RMSE & 0.0669 & 0.0669 & 0.0665 \\ 
MAD & 0.0400 & 0.0454 & 0.0457 \\ 
n=400 & \  & \  & \  \\ 
MEAN BIAS & 0.0012 & -0.0132 & 0.0074 \\ 
MEDIAN BIAS & 0.0049 & -0.0162 & 0.0077 \\ 
RMSE & 0.0501 & 0.0494 & 0.0495 \\ 
MAD & 0.0349 & 0.0325 & 0.0360 \\ \hline\hline
\end{tabular}
}

\end{center}

\subsection{Designs with discrete potential outcomes}

We also report the performance of our estimator in a sample drawn from
Example 2, where the observed outcomes are determined in a multinomial
choice model:%
\[
Y_{i}(D_{i})=\arg \max_{j\in \{0,1,2\}}Y_{i,j}^{\ast }(D_{i})\text{,}
\]%
where the potential outcomes are 
$Y_{i,j}^{\ast
}(d)\equiv v_{j}(X_{i,j},d)+\varepsilon _{i,j}$ with%
\[
v_{j}(X_{i,j},d)=\alpha _{j}(d)+X_{i,j}\beta _{j}(d)
\]
for $j=1,2$, and $v_0(X_{i,j},d) =0$ by way of normalization.  
The exogenous covariates $X_{i,j}$ are drawn independently from standard normal, and the intercepts and slope coefficients are:%
\begin{eqnarray*}
\alpha _{1}(0)=0\text{; }\alpha _{2}(0)=1\text{; }\alpha _{1}(1)=1\text{; 
}\alpha _{2}(1)=2\text{; } 
\beta _{1}(0)=0.8\text{; }\beta _{2}(0)=1\text{; }\beta _{1}(1)=1%
\text{; }\beta _{2}(1)=2\text{.}
\end{eqnarray*}%
For each individual $i$, the binary treatment $D_{i}$ is determined as before:
\[
D_{i}=1\{U_{i} < Z_{i}\}\text{,}
\]%
where the instrument $Z_{i}$ is independently drawn from a standard uniform distribution. 
The marginal distribution of the selection error $U_{i}$ is standard uniform. 
Conditional on $U_{i}=u$, the outcome errors $\varepsilon_{i,j}$ are independent across $j=0,1,2$ and are distributed as type-1 extreme value with unit variance and means $(0,\delta u,2\delta u)$ respectively, where $\delta$ is a parameter to be specified in the data-generating process. 
We adopt this specification as it allows for substantial dependence between $U_{i}$ and $\varepsilon _{i}\equiv (\varepsilon_{i,j})_{j=0,1,2}$.

Our simulation study uses sample sizes $n\in \{250,500,1000,2000\}$. 
For each sample size $n$, we generate $S=400$ independent samples from the data-generating process above. 
Throughout this section, we focus on estimating a conditional distribution of \textit{potential} outcomes $\Pr\{Y_d=j|X \in \omega\}$ for $d\in \{0,1\}$ and $ j \in \{1,2\}$, where $\omega \equiv\{x:x_{j}\in \lbrack -1,1]$ for $j=1,2\}$ is a subset of the support of covariates. 

As a benchmark, we first implement the \textit{infeasible} version of the estimator in Section \ref{estimation}, where knowledge of $h_{1}^{\ast }(\cdot )$ and $h_{0}^{\ast }(\cdot )$ are used for finding $(x,\tilde{x})$ with $v(x,1)=v(\tilde{x},0)$ for estimating $\Pr\{Y_1=j|X \in \omega\}$ (or $v(x,0)=v(\tilde{x},1)$ for estimating $\Pr\{Y_0=j|X \in \omega\}$). 
In this case, $F_{g|u}$ has a known close form, and we use numerical integration via mid-point approximation to calculate $h_{1}^{\ast }$ and $h_{0}^{\ast }$. 
Table 5 shows the mean bias and mean squared error (M.S.E.) for the infeasible estimator calculated from the $S$ simulated samples. 
We report these measures for $Pr\{Y_d = j | X \in \omega \}$ for $d=0,1$ and $j=1,2$. 
In the last two columns of the table, we report the M.S.E. for the \textit{full} vector summarizing the probability mass function of $Y_d$, i.e., $[\Pr\{Y_d=1|X \in \omega \}, \Pr\{Y_d=2|X \in \omega \}]$. 

The M.S.E. in Table 5 diminishes at a root-n rate that is proportional to the sample size.
While in most cases the mean bias decreases with the sample size, the root-n rate of convergence appears to be substantially driven by the diminishing variance of the estimator.
The performance does not vary substantively across different designs with the parameter values that affect the strength of correlation between the structural error $\varepsilon _{i,j}$ and the selection error $U_{i}$, i.e., the parameter values $\delta=(1/4,1/3,1/2)$.

We then construct a feasible version of the estimator by using $h_{1}^{\ast }$ and $h_{0}^{\ast }$ with their corresponding kernel estimates.
Specifically, we use bivariate Gaussian kernels with bandwidths $1.06\hat{\sigma}^{-1/7}$, where $\hat{\sigma}$ denotes the sample standard deviation of components in $(X_{i},P_{i})$. 
Table 6 reports the mean bias and M.S.E. of this feasible estimator in the same data-generating process as in Table 5. 
The estimation errors in Table 6 are overall larger than those reported in Table 5 but demonstrate similar patterns of convergence in most cases, even though the convergence appears to be slower for the MSE of $F_{Y_{0}|X\in \omega}$ and $F_{Y_{1}|X\in \omega}$ when $\delta $ is large. 
The difference in the magnitude of estimation errors across Table 5
and Table 6 is attributable to the estimation error in $h_{1}^{\ast }$ and $h_{2}^{\ast }$. All in all, we conclude our estimator has decent
finite-sample performance in these designs.

\pagebreak 

{\scalefont{0.75}
\begin{center}

Table 5. Performance of Infeasible Estimator

\begin{tabular}{cc|cc|cc|cc|cc|c|c}
\hline
&  & \multicolumn{2}{|c|}{$\Pr\{Y_0 = 1|\omega\}$} & \multicolumn{2}{|c|}{$\Pr\{Y_0 = 2|\omega\}$}
& \multicolumn{2}{|c|}{$\Pr\{Y_1 = 1|\omega\}$} & \multicolumn{2}{|c|}{$\Pr\{Y_1 = 2|\omega\}$} & 
$F[Y_0|\omega]$ & $F[Y_1|\omega]$ \\ \hline\hline
$\delta $ & $n$ & Bias & MSE & Bias & MSE & Bias & MSE & Bias & MSE & MSE & 
MSE \\ \hline
\multicolumn{1}{r}{1/4} & \multicolumn{1}{r|}{250} & \multicolumn{1}{|r}{
0.01305} & \multicolumn{1}{r|}{0.00225} & \multicolumn{1}{|r}{-0.00120} & 
\multicolumn{1}{r|}{0.00328} & \multicolumn{1}{|r}{-0.00753} & 
\multicolumn{1}{r|}{0.00254} & \multicolumn{1}{|r}{0.00276} & 
\multicolumn{1}{r|}{0.00275} & \multicolumn{1}{|r|}{0.00554} & 
\multicolumn{1}{|r}{0.00254} \\ 
\multicolumn{1}{r}{1/4} & \multicolumn{1}{r|}{500} & \multicolumn{1}{|r}{
0.00970} & \multicolumn{1}{r|}{0.00113} & \multicolumn{1}{|r}{-0.00215} & 
\multicolumn{1}{r|}{0.00175} & \multicolumn{1}{|r}{-0.00693} & 
\multicolumn{1}{r|}{0.00121} & \multicolumn{1}{|r}{0.00227} & 
\multicolumn{1}{r|}{0.00144} & \multicolumn{1}{|r|}{0.00288} & 
\multicolumn{1}{|r}{0.00121} \\ 
\multicolumn{1}{r}{1/4} & \multicolumn{1}{r|}{1000} & \multicolumn{1}{|r}{
0.00910} & \multicolumn{1}{r|}{0.00077} & \multicolumn{1}{|r}{-0.00269} & 
\multicolumn{1}{r|}{0.00095} & \multicolumn{1}{|r}{-0.00389} & 
\multicolumn{1}{r|}{0.00078} & \multicolumn{1}{|r}{0.00146} & 
\multicolumn{1}{r|}{0.00080} & \multicolumn{1}{|r|}{0.00172} & 
\multicolumn{1}{|r}{0.00078} \\ 
\multicolumn{1}{r}{1/4} & \multicolumn{1}{r|}{2000} & \multicolumn{1}{|r}{
0.00871} & \multicolumn{1}{r|}{0.00041} & \multicolumn{1}{|r}{-0.00447} & 
\multicolumn{1}{r|}{0.00051} & \multicolumn{1}{|r}{-0.00376} & 
\multicolumn{1}{r|}{0.00038} & \multicolumn{1}{|r}{0.00110} & 
\multicolumn{1}{r|}{0.00043} & \multicolumn{1}{|r|}{0.00092} & 
\multicolumn{1}{|r}{0.00038} \\ \hline
\multicolumn{1}{r}{1/3} & \multicolumn{1}{r|}{250} & \multicolumn{1}{|r}{
0.01825} & \multicolumn{1}{r|}{0.00269} & \multicolumn{1}{|r}{-0.00955} & 
\multicolumn{1}{r|}{0.00339} & \multicolumn{1}{|r}{-0.00350} & 
\multicolumn{1}{r|}{0.00248} & \multicolumn{1}{|r}{-0.00262} & 
\multicolumn{1}{r|}{0.00276} & \multicolumn{1}{|r|}{0.00608} & 
\multicolumn{1}{|r}{0.00275} \\ 
\multicolumn{1}{r}{1/3} & \multicolumn{1}{r|}{500} & \multicolumn{1}{|r}{
0.00696} & \multicolumn{1}{r|}{0.00126} & \multicolumn{1}{|r}{-0.00144} & 
\multicolumn{1}{r|}{0.00166} & \multicolumn{1}{|r}{-0.00656} & 
\multicolumn{1}{r|}{0.00123} & \multicolumn{1}{|r}{0.00330} & 
\multicolumn{1}{r|}{0.00130} & \multicolumn{1}{|r|}{0.00292} & 
\multicolumn{1}{|r}{0.00144} \\ 
\multicolumn{1}{r}{1/3} & \multicolumn{1}{r|}{1000} & \multicolumn{1}{|r}{
0.01170} & \multicolumn{1}{r|}{0.00079} & \multicolumn{1}{|r}{-0.00842} & 
\multicolumn{1}{r|}{0.00097} & \multicolumn{1}{|r}{-0.00315} & 
\multicolumn{1}{r|}{0.00070} & \multicolumn{1}{|r}{0.00007} & 
\multicolumn{1}{r|}{0.00081} & \multicolumn{1}{|r|}{0.00175} & 
\multicolumn{1}{|r}{0.00080} \\ 
\multicolumn{1}{r}{1/3} & \multicolumn{1}{r|}{2000} & \multicolumn{1}{|r}{
0.00918} & \multicolumn{1}{r|}{0.00045} & \multicolumn{1}{|r}{-0.00625} & 
\multicolumn{1}{r|}{0.00060} & \multicolumn{1}{|r}{-0.00486} & 
\multicolumn{1}{r|}{0.00043} & \multicolumn{1}{|r}{0.00273} & 
\multicolumn{1}{r|}{0.00044} & \multicolumn{1}{|r|}{0.00105} & 
\multicolumn{1}{|r}{0.00043} \\ \hline
\multicolumn{1}{r}{1/2} & \multicolumn{1}{r|}{250} & \multicolumn{1}{|r}{
0.01322} & \multicolumn{1}{r|}{0.00235} & \multicolumn{1}{|r}{-0.00770} & 
\multicolumn{1}{r|}{0.00323} & \multicolumn{1}{|r}{-0.00359} & 
\multicolumn{1}{r|}{0.00253} & \multicolumn{1}{|r}{-0.00158} & 
\multicolumn{1}{r|}{0.00295} & \multicolumn{1}{|r|}{0.00559} & 
\multicolumn{1}{|r}{0.00248} \\ 
\multicolumn{1}{r}{1/2} & \multicolumn{1}{r|}{500} & \multicolumn{1}{|r}{
0.01426} & \multicolumn{1}{r|}{0.00130} & \multicolumn{1}{|r}{-0.01140} & 
\multicolumn{1}{r|}{0.00196} & \multicolumn{1}{|r}{-0.00231} & 
\multicolumn{1}{r|}{0.00120} & \multicolumn{1}{|r}{0.00081} & 
\multicolumn{1}{r|}{0.00132} & \multicolumn{1}{|r|}{0.00326} & 
\multicolumn{1}{|r}{0.00123} \\ 
\multicolumn{1}{r}{1/2} & \multicolumn{1}{r|}{1000} & \multicolumn{1}{|r}{
0.01142} & \multicolumn{1}{r|}{0.00075} & \multicolumn{1}{|r}{-0.00782} & 
\multicolumn{1}{r|}{0.00094} & \multicolumn{1}{|r}{-0.00512} & 
\multicolumn{1}{r|}{0.00064} & \multicolumn{1}{|r}{0.00378} & 
\multicolumn{1}{r|}{0.00071} & \multicolumn{1}{|r|}{0.00169} & 
\multicolumn{1}{|r}{0.00070} \\ 
\multicolumn{1}{r}{1/2} & \multicolumn{1}{r|}{2000} & \multicolumn{1}{|r}{
0.01007} & \multicolumn{1}{r|}{0.00048} & \multicolumn{1}{|r}{-0.00772} & 
\multicolumn{1}{r|}{0.00061} & \multicolumn{1}{|r}{-0.00438} & 
\multicolumn{1}{r|}{0.00039} & \multicolumn{1}{|r}{0.00353} & 
\multicolumn{1}{r|}{0.00042} & \multicolumn{1}{|r|}{0.00109} & 
\multicolumn{1}{|r}{0.00043} \\ \hline
\end{tabular}

\bigskip

Table 6. Performance of Feasible Estimator

\begin{tabular}{cc|cc|cc|cc|cc|c|c}
\hline
&  & \multicolumn{2}{|c|}{$\Pr\{Y_0 = 1|\omega\}$} & \multicolumn{2}{|c|}{$\Pr\{Y_0 = 2|\omega\}$}
& \multicolumn{2}{|c|}{$\Pr\{Y_1 = 1|\omega\}$} & \multicolumn{2}{|c|}{$\Pr\{Y_1 = 2|\omega\}$} & 
$F[Y_0|\omega]$ & $F[Y_1|\omega]$ \\ \hline\hline
$\delta $ & $n$ & Bias & MSE & Bias & MSE & Bias & MSE & Bias & MSE & MSE & 
MSE \\ \hline
\multicolumn{1}{r}{1/4} & \multicolumn{1}{r|}{250} & 0.00522 & 0.00360 & 
0.05045 & 0.00785 & -0.04019 & 0.00480 & -0.00071 & 0.00401 & 0.01145 & 
0.00533 \\ 
\multicolumn{1}{r}{1/4} & \multicolumn{1}{r|}{500} & 0.00931 & 0.00237 & 
0.04097 & 0.00489 & -0.03585 & 0.00303 & -0.00602 & 0.00237 & 0.00726 & 
0.00400 \\ 
\multicolumn{1}{r}{1/4} & \multicolumn{1}{r|}{1000} & 0.01099 & 0.00131 & 
0.03946 & 0.00332 & -0.03488 & 0.00215 & -0.00750 & 0.00130 & 0.00464 & 
0.00252 \\ 
\multicolumn{1}{r}{1/4} & \multicolumn{1}{r|}{2000} & 0.01193 & 0.00093 & 
0.04014 & 0.00277 & -0.03345 & 0.00167 & -0.00784 & 0.00071 & 0.00370 & 
0.00171 \\ \hline
\multicolumn{1}{r}{1/3} & \multicolumn{1}{r|}{250} & 0.00340 & 0.00395 & 
0.04905 & 0.00792 & -0.03382 & 0.00415 & -0.00671 & 0.00415 & 0.01187 & 
0.00725 \\ 
\multicolumn{1}{r}{1/3} & \multicolumn{1}{r|}{500} & 0.00409 & 0.00226 & 
0.04718 & 0.00525 & -0.03287 & 0.00278 & -0.00559 & 0.00228 & 0.00751 & 
0.00567 \\ 
\multicolumn{1}{r}{1/3} & \multicolumn{1}{r|}{1000} & 0.00856 & 0.00128 & 
0.03914 & 0.00322 & -0.03292 & 0.00191 & -0.00630 & 0.00125 & 0.00450 & 
0.00486 \\ 
\multicolumn{1}{r}{1/3} & \multicolumn{1}{r|}{2000} & 0.01029 & 0.00086 & 
0.03921 & 0.00261 & -0.02912 & 0.00137 & -0.00819 & 0.00077 & 0.00346 & 
0.00468 \\ \hline
\multicolumn{1}{r}{1/2} & \multicolumn{1}{r|}{250} & 0.00267 & 0.00369 & 
0.04291 & 0.00680 & -0.03508 & 0.00399 & 0.00350 & 0.00367 & 0.01050 & 
0.00643 \\ 
\multicolumn{1}{r}{1/2} & \multicolumn{1}{r|}{500} & 0.00563 & 0.00217 & 
0.04076 & 0.00450 & -0.03205 & 0.00266 & 0.00067 & 0.00215 & 0.00667 & 
0.00417 \\ 
\multicolumn{1}{r}{1/2} & \multicolumn{1}{r|}{1000} & 0.00426 & 0.00123 & 
0.04284 & 0.00357 & -0.02865 & 0.00169 & -0.00184 & 0.00109 & 0.00481 & 
0.00238 \\ 
\multicolumn{1}{r}{1/2} & \multicolumn{1}{r|}{2000} & 0.01084 & 0.00089 & 
0.03538 & 0.00230 & -0.02820 & 0.00128 & -0.00173 & 0.00058 & 0.00319 & 
0.00207 \\ \hline
\end{tabular}

\end{center}
}

\section{Concluding Remarks}

In this paper, we consider identification and estimation of
weakly separable models with endogenous binary treatment. Existing approaches are based on a monotonicity condition, which is violated in models with multiple unobserved idiosyncratic shocks. 
Such models arise in many important empirical settings, including cases where potential outcomes are determined by Roy models, multinomial choice models, or random coefficients with dummy endogenous variables. 
We establish new identification results for these models which are constructive and conducive to estimation procedures. 
A simulation study indicates adequate finite sample performance of our method.

This paper leaves several open questions for future research.
For example, it may not be feasible to locate pairs of $(x,\tilde{x})$ that satisfy the matching criterion perfectly due to limited, say, discrete, support of $X$ or $Z$.
In this case, it remains an open question how or whether the partial identification approach proposed in \citeasnoun{shaikhvytlacil11} can be applied in the current setting where multiple indices in potential outcomes defy any notion of monotonicity.
Besides, our method requires the selection of the number and location of cutoff points, so a data-driven method for selecting these would be useful. Furthermore, the relative efficiency of our proposed estimation approach needs to be explored, perhaps by deriving efficiency bounds for these new classes of models.

\pagebreak 

{\normalsize 
\bibliographystyle{econometrica}
\bibliography{overall}
}

\end{document}